\newtheorem{theorem}{Theorem}
\newtheorem{definition}{Definition}
\newtheorem{problem}{Problem}
\newenvironment{proof}[1][Proof]{\quad \textit{#1.} }{\ \rule{0.5em}{0.5em}}
\DeclareMathOperator*{\argmax}{arg\,max}
\DeclareMathOperator{\diag}{diag}
\begin{document}


\newcommand{\tp}{^{\intercal}}

\newcommand{\conv}{*}

\newcommand{\T}{\mathcal{T}}

\newcommand{\Prob}{\mathbb{P}}

\newcommand{\R}{\mathbb{R}}

\newcommand{\E}{\mathbb{E}}

\newcommand{\1}{\mathsf{1}}

\newcommand{\0}{\mathsf{0}}

\newcommand{\herm}{^{*}}

\newcommand{\adj}{'}

\newcommand{\sss}[1]
 {\scriptscriptstyle #1}

\newcommand{\vect}{\nu} 

\newcommand{\kron}{\otimes} 
 
\newcommand{\cgeq}{\succeq}  

\newcommand{\cleq}{\preceq}

\newcommand{\cge}{\succ}  

\newcommand{\cle}{\prec}

\newcommand{\inpr}[2]
 {\langle #1,#2 \rangle }

\newcommand{\bs}{\boldsymbol}
\newcommand{\br}[1]{\bs{\mathrm{#1}}}

\begin{frontmatter}

\title{Exact Set-valued Estimation using Constrained Convex Generators for uncertain Linear Systems} 

\thanks[footnoteinfo]{This work was partially supported by the Portuguese Funda\c{c}\~{a}o para a Ci\^{e}ncia e a Tecnologia (FCT) through Institute for Systems and Robotics (ISR), under Laboratory for Robotics and Engineering Systems (LARSyS) project UIDB/50009/2020, through project PCIF/MPG/0156/2019 FirePuma and through COPELABS, University Lus\'{o}fona project UIDB/04111/2020.}

\author[First]{Daniel Silvestre\thanksref{footnoteinfo}} 

\address[First]{School of Science and Technology from the NOVA University of Lisbon (FCT/UNL), 2829-516 Caparica, Portugal, with COPELABS from the Lus\'ofona University, and also with the Institute for Systems and Robotics, Instituto Superior T\'ecnico, University of Lisbon. (e-mail: dsilvestre@isr.tecnico.ulisboa.pt)}

\begin{abstract}                
Set-valued state estimation when in the presence of uncertainties in the model have been addressed in the literature essentially following three main approaches: i) interval arithmetic of the uncertain dynamics with the estimates; ii) factorizing the uncertainty into matrices with unity rank; and, iii) performing the convex hull for the vertices of the uncertainty space. Approach i) and ii) introduce a lot of conservatism because both disregard the relationship of the parameters with the entries of the dynamics matrix. On the other hand, approach iii) has a large growth on the number of variables required to represent the set or is approximated losing its main advantage in comparison with i) and ii). In this paper, with the application of autonomous vehicles in GPS-denied areas that resort to beacon signals for localization, we develop an exact (meaning no added conservatism) and optimal (smallest growth in the number of variables) closed-form definition for the convex hull of Convex Constrained Generators (CCGs). This results in a more efficient method to represent the minimum volume convex set corresponding to the state estimation. Given that reductions methods are still lacking in the literature for CCGs, we employ an approximation using ray-shooting that is comparable in terms of accuracy with methods for Constrained Zonotopes as the ones implemented in CORA. Simulations illustrate the greater accuracy of CCGs with the proposed convex hull operation in comparison to Constrained Zonotopes. 
\end{abstract}

\begin{keyword}
Observers for linear systems; Parameter-varying systems; Guidance navigation and control.
\end{keyword}

\end{frontmatter}

\section{Introduction}
In the current state-of-the-art literature in autonomous systems, vehicles can use sensor measurements and set-valued observers for self-localization. The generated sets containing the true state and can be used for collision avoidance by checking intersection with the sets describing obstacles (\cite{ribeiro:20,ribeiro:21}). Therefore, these methods benefit from accurate set representations as conservatism will lead to worse trajectories or even unfeasibility. Considering range and bearing sensors corrupted by noise typically requires an over-approximation either using intervals (\cite{jaulin:11}) or ellipsoids (\cite{thor:08}). The introduction of Constrained Convex Generators (CCGs) in (\cite{silvestre:CCG}) enables representing both polytopes and ellipsoids with closed-form expressions for all the necessary set operations.

In the literature, the estimation has been carried in the stochastic setup with different Kalman filters according to the assumptions. Single beacon range measurement was tackled in (\cite{batista:11}) by a transformation of the nonlinear dynamics to obtain a Linear Time Varying (LTV) which allows for a Kalman Filter. The nonlinear model can be directly used by an Extended Kalman Filter (\cite{stilwell:05,hu:07,lee:07}). The stochastic approach is not desirable when a guaranteed state estimation is needed as in the case of fault-tolerant control, Model Predictive approaches, or vehicle collision detection with obstacles. 

The more general setup of estimation when considering uncertain Linear Parameter-Varying (LPV) has been performed for polytopes such as in (\cite{silvestre:17}). When there are no uncertainties, proposals using intervals (\cite{combastel:14}), zonotopes (\cite{combastel:03}) and ellipsoids (\cite{chernousko:05}) suffer from approximations during the intersection phase. Additionally, one can use ellipsotopes (\cite{ellipsotopes}) and AH-polytopes (\cite{ah-polytopes}) but not with uncertainties since there are currently no proposal of explicit formulas for the convex hull operation. Using polytopes in hyperplane representation (\cite{silvestre:detectable}) or in Constrained Zonotopes (CZs) (\cite{scott:16}) are the most predominant approaches. We remark that there is the possibility to represent the uncertainties as disturbance signals of varying intensity like the method in (\cite{silvestre:SVE}). Resorting to the equivalent techniques for nonlinear systems in (\cite{bonnifait:08}), (\cite{camacho:05}), (\cite{agung:09}), (\cite{raffo:18}), (\cite{shamma:18}), respectively, will result in unnecessary conservatism since we are disregarding the very specific structure of an uncertain LPV.

The convex hull operation for CZs has been introduced in (\cite{koeln:22}) and later generalized for CCGs in (\cite{dsilvestre:CCGapproxCvxHull}) at the expenses of adding a large number of generator variables equal to $3(n_{1}+n_{2})+1$, where $n_{1}$ and $n_{2}$ represent the number of original variables in both sets. In this paper, we provide a closed-form description with $n_{1}+n_{2}+1$ variables (and also a linear growth in the number of constraints) for CCGs. Given that CZs and ellipsoids are a particular case of CCGs, this also entails that their convex hull can be written with $n_{1}+n_{2}+1$ generators in the CCG format. The main contributions can be highlighted as:
\begin{itemize}
	\item Introduction of a closed-form expression for the convex hull that is exact for CCGs with optimal number of variables;
	\item The proposed method removes the growth factor associated with the convex hull, meaning that fewer order reduction procedures are required to maintain a tractable representation of the set-valued estimates.
\end{itemize}

The remainder of the paper is organized as follows. Section \ref{sec:problem} formalizes the state estimation problem, highlighting the exponential growth of the auxiliary variables. We review in Section \ref{sec:CCG} the definition and main set operations for CCGs, while Section \ref{sec:estimationCCG} is dedicated to presenting the proposed convex hull algorithm. Simulations using a unicycle model for a land autonomous vehicle are provided in Section \ref{sec:simulations}. Conclusions and directions of future work are given in Section \ref{sec:discussion}.

\textit{Notation }: We let $\0_n$ denote the $n$-dimensional vector of zeros and $I_{n}$ the identity matrix of size $n$. The operator $\mathrm{diag}(v)$ creates a diagonal matrix with $v$ in the diagonal or extracts the diagonal if the argument is a matrix. The transpose of a vector $v$ is denoted by $v\tp$, while the Euclidean norm for vector $x$ is represented as $\|x\|_{2}:= \sqrt{x\tp x}$. On the other hand, $\|x\|_{\infty}:= \max_{i} |x_{i}|$. The cartesian product is denoted by $\times$, the Minkowski sum of two sets by $\oplus$ and the intersection after applying a matrix $R$ to the first set by $\cap_{R}$.

\section{Problem Statement} \label{sec:problem}
The problem of state estimation in uncertain LPVs can be cast as finding a set of possible values given the measurements, disturbance, noise and initial state bounds when the model is given by:
\begin{equation}
	\label{eq:ulpv}
	\begin{aligned}
		x(k+1) &= \Big(F(\rho(k)) + \sum_{\ell = 1}^{n_{\Delta}} \Delta_{\ell}(k) U_{\ell}\Big) x(k) + B(\rho(k)) u(k) \\
		& \quad+ L(\rho(k)) d(k) \\
		y(k) &= C(\rho(k)) x(k) + N(\rho(k)) w(k) 
	\end{aligned}
\end{equation}
where $x(k) \in \R^{n}$, $u(k) \in \R^{n_{u}}$, $d(k) \in \R^{n_{d}}$, $y(k) \in \R^{m}$ and $w(k) \in \R^{n_{w}}$ are the system state, input, disturbance signal, output and noise, respectively. The parameter $\rho(k)$ is the part of the parameters that can be measured at time $k$, which can be treated as in the case of LTVs. The main challenge appears from considering the $n_{\Delta}$ uncertainties denoted by $\Delta_{\ell}$ and the constant matrices $U_{\ell}$ that account for how the uncertainties affect the nominal dynamics matrix given by $F(\rho(k))$. To lighten the notation, we will consider $F_{k}:= F(\rho(k))$ and similarly for all the remaining matrices in \eqref{eq:ulpv}. Notice that we have to explicitly consider $\rho$ to account for nonlinearities that enter the model in a linear fashion as will happen with unicycle model used in Section \ref{sec:simulations}. Moreover, in order to have a well-posed problem, we assume that all unknown signals are bounded within a compact convex set denoted by the correspondent capital letter, i.e., $x(0) \in X(0)$, $d(k) \in D(k)$ and $w(k) \in W(k)$. Without loss of generality, we will assume that $\forall k, |\Delta_{\ell}(k)| \leq 1$.

The problem addressed in this paper is summarized as:
\begin{problem}
	\label{prob:stateEstimation}
	Given compact convex sets $X(0)$, $D(k)$ and $W(k)$ for all $k \geq 0$ and measurements $y(k)$, how to compute a set $X(k)$ such that it is guaranteed that $x(k) \in X(k)$, $\forall k \geq 0$.
\end{problem}
Notice that Problem \ref{prob:stateEstimation} is called \emph{state estimation} although a converse definition could be presented for the output of the system (this is of particular interest in sensitivity analysis (\cite{silvestre:sensitivity}) and system distinguishability (\cite{silvestre:distinguish})). 

\section{Constrained Convex Generators overview} \label{sec:CCG}
In this section, we first review the main set operations and introduce the novel expression for the convex hull of the union of two CCGs. Definition \ref{def:CCG} and Definition \ref{def:CCGoperations} provide a formal description of CCGs and the required operations.
\begin{definition}[\cite{silvestre:CCG}]
	\label{def:CCG}
	A Constrained Convex Generator (CCG) $\mathcal{Z} \subset \R^{n}$ is defined by the tuple $(G,c,A,b,\mathfrak{C})$ with $G \in \R^{n\times n_{g}}$, $c \in \R^{n}$, $A \in \R^{n_{c}\times n_{g}}$, $b \in \R^{n_{c}}$, and $\mathfrak{C} := \{\mathcal{C}_{1}, \mathcal{C}_{2}, \cdots, \mathcal{C}_{n_{p}}\}$ such that:
	\begin{equation}
		\mathcal{Z} = \{G \xi + c :  A\xi = b, \xi \in \mathcal{C}_{1} \times \cdots \times \mathcal{C}_{n_{p}}\}.
	\end{equation}
\end{definition}
\begin{definition}[\cite{silvestre:CCG}]
	\label{def:CCGoperations}
	Consider three CCGs as in Definition \ref{def:CCG}:
	\begin{itemize}
		\item $Z = (G_{z}, c_{z}, A_{z}, b_{z},\mathfrak{C}_{z}) \subset \R^{n}$;
		\item $W = (G_{w}, c_{w}, A_{w}, b_{w},\mathfrak{C}_{w}) \subset \R^{n}$;
		\item $Y = (G_{y}, c_{y}, A_{y}, b_{y},\mathfrak{C}_{y}) \subset \R^{m}$;
	\end{itemize}
	and a matrix $R \in \R^{m \times n}$ and a vector $t \in \R^{m}$. The three set operations are defined as:
	\begin{equation}
		\label{eq:CCGlinearmap}
		RZ+t = \left(RG_{z}, Rc_{z}+t, A_{z},b_{z},\mathfrak{C}_{z}\right)
	\end{equation}
	\begin{equation}
		\label{eq:CCGminkowski}
		\resizebox{\hsize}{!}{$
			Z \oplus W = \left(\begin{bmatrix}
				G_{z} & G_{w}
			\end{bmatrix}, c_{z} + c_{w}, \begin{bmatrix}
				A_{z} & \0 \\
				\0 & A_{w}
			\end{bmatrix},\begin{bmatrix}
				b_{z} \\ b_{w}
			\end{bmatrix},\{\mathfrak{C}_{z},\mathfrak{C}_{w}\}\right)
			$}
	\end{equation}
	\begin{equation}
		\label{eq:CCGintersection}
		\resizebox{\hsize}{!}{$
			Z \cap_{R} Y = \left(\begin{bmatrix}
				G_{z} & \0
			\end{bmatrix}, c_{z}, \begin{bmatrix}
				A_{z} & \0 \\
				\0 & A_{y} \\
				RG_{z} & -G_{y}
			\end{bmatrix},\begin{bmatrix}
				b_{z} \\ b_{y} \\ c_{y} - Rc_{z}
			\end{bmatrix},\{\mathfrak{C}_{z},\mathfrak{C}_{y}\}\right).$}
	\end{equation}
\end{definition}

We would like to point out that all the aforementioned set representations are subsets of CCGs, namely:
\begin{itemize}
	\item an interval corresponds to $(G,c,[\,],[\,],\|\xi\|_{\infty} \leq 1)$, for a diagonal matrix $G$;
	\item a zonotope is given by $(G,c,[\,],[\,],\|\xi\|_{\infty} \leq 1)$;
	\item an ellipsoid is defined by $(G,c,[\,],[\,],\|\xi\|_{2} \leq 1)$, for a square matrix $G$;
	\item a CZ or polytope is $(G,c,A,b,\|\xi\|_{\infty} \leq 1)$;
	\item a convex cone in $\R^{n}$ is $(G,c,[\,],[\,],\xi \geq 0)$;
	\item ellipsotopes are given by $(G,c,A,b,\|\xi\|_{p_{1}} \leq 1, \cdots$ \\ $, \|\xi\|_{p_{m}} \leq 1)$, for some $p_{i} > 0$, $1 \leq i \leq m$;
	\item AH-polytopes are given by $(G,c,[\,],[\,],A \xi \leq b)$.
\end{itemize}

\section{State Estimation for uncertain LPVs using Constrained Convex Generators (CCGs)} \label{sec:estimationCCG}

In this section, we first recover the standard procedure to carry state estimation in the case of an uncertain LPV and then introduce the improved representation directly in CCG format that improves the work in (\cite{dsilvestre:CCGapproxCvxHull}). The propagation phase using the model has to account for the polytopic description of the uncertainty space, namely, set $X_{\mathrm{prop}}(k+1)$ after applying the dynamics to $X(k)$ can be written as:
\begin{equation}
	\label{eq:propagate}
	\resizebox{\hsize}{!}{$
		\begin{aligned}
			X_{\mathrm{prop}}(k+1) = & \mathrm{cvxHull}\left(\bigcup_{\Delta \in \mathrm{vertex}([-1,1]^{n_{\Delta}})} \left(F_{k} + \sum_{\ell = 1}^{n_{\Delta}} \Delta_{\ell}(k) U_{\ell}\right) X(k)\right)\\
			& + B_{k} u(k) \oplus L_{k} D(k),
		\end{aligned}
		$}
\end{equation}
where $\mathrm{cvxHull}$ is the convex hull function.

The update phase corresponding to intersection with the measurement set $Y(k+1)$, i.e., all state values that could result in the measurement $y(k+1)$, that corresponds to:
\begin{equation}
	\label{eq:update}
	X(k+1) = X_{\mathrm{prop}}(k+1) \cap_{C} Y(k+1).
\end{equation}

\subsection{Convex Hull for CCGs}
Let us start by defining the convex hull of two sets:
\begin{equation}
	\label{eq:cvxHull}
	\begin{aligned}
		\mathrm{cvxHull}\left(Z_{1},Z_{2}\right) := \big\{z: & z = \lambda z_{1} + (1-\lambda) z_{2}, \\
		&\lambda \in [0,1], z_{1} \in Z_{1}, z_{2} \in Z_{2}\big\}.
	\end{aligned}	
\end{equation}

Let us introduce a specific instance of norm cones that are going to be used in the following result. For a norm unity ball $\mathfrak{C}$ defined as $\|\xi\|_{p} \leq 1$, let us associate with it the correspondent norm cone of order zero $\mathfrak{C}^{(0)}(\xi,\lambda,a,b) := \|\xi\|_{p} + w_{0} \lambda \leq v_{0}$ with the initialization of the row vector $w_{0}$ and column vector $\lambda$ as empty and scalar $v_{0} = 1$. In the base case, we can omit the arguments with a slight abuse of notation. We can now define norm cones of higher order of this operation in a recursive manner $\mathfrak{C}^{(\tau)}(\xi,\lambda,a,b) := \|\xi\|_{p} + \begin{bmatrix}
	a & bw_{\tau-1}
\end{bmatrix} \lambda \leq b v_{\tau-1}$, such that the generator variables are $\lambda \in \R^{\tau}$ and $\xi$ with the same dimension as the zero order cone and constant arguments $a$ and $b$. 

We can now state the main theorem introducing the closed-form expression for the convex hull of two CCGs and the complexity of this representation.

\begin{theorem}
	\label{pro:cvxHull}
	Consider two Constrained Convex Generators (CCGs) as in Definition \ref{def:CCG}:
	\begin{itemize}
		\item $X = (G_{x}, c_{x}, A_{x}, b_{x},\mathfrak{C}^{(\tau_{x})}_{x}) \subset \R^{n}$;
		\item $Y = (G_{y}, c_{y}, A_{y}, b_{y},\mathfrak{C}^{(\tau_{y})}_{y}) \subset \R^{n}$;
	\end{itemize}
	such that $A_{x} \in \R^{n_{c}^{x} \times n_{g}^{x}}$, $A_{y} \in \R^{n_{c}^{y} \times n_{g}^{y}}$, $\xi_{x} \in \mathfrak{C}^{(\tau_{x})}_{x} \implies \alpha \xi_{x} \in \mathfrak{C}^{(\tau_{x})}_{x}$, for $\alpha \in [0,1]$ and similarly for $\mathfrak{C}^{(\tau_{y})}_{y}$. The CCG corresponding to the exact convex hull $Z_{h} = (G_{h}, c_{h}, A_{h}, b_{h},\mathfrak{C}_{h}) \subset \R^{n}$ is given by:
	\begin{equation}
		\label{eq:Zh}
		\begin{aligned}
			G_{h} &= \begin{bmatrix}
				G_{x} & G_{y} & c_{x}-c_{y}
			\end{bmatrix}, c_{h} = \frac{c_{x}+c_{y}}{2},\\
			A_{h} &= \begin{bmatrix}
						A_{x} & \0 & -b_{x}\\
						\0 & A_{y} & b_{y} 
	                 \end{bmatrix}, b_{h} = \begin{bmatrix}
				\frac{1}{2} b_{x}\\
				\frac{1}{2} b_{y}
			\end{bmatrix} \\
			\mathfrak{C}_{h} &= \{\mathfrak{C}^{(\tau_{x}+1)}_{x}(\xi_{x},\xi_{\lambda},-1,0.5),\mathfrak{C}^{(\tau_{y}+1)}_{y}(\xi_{y},\xi_{\lambda},1,0.5),\R\},
		\end{aligned}
	\end{equation}
	which has $n_{g}^{x}+n_{g}^{y}+1$ generators and $n_{c}^{x}+n_{c}^{y}$ constraints.
\end{theorem}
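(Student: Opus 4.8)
The plan is to prove the set identity $Z_h=\mathrm{cvxHull}(X,Y)$ by a lifting argument in the spirit of the constrained-zonotope construction of \cite{koeln:22}, adapted to the norm-cone description of CCGs. Let $\xi_x,\xi_y$ be generators of $X,Y$, so that a point of $\mathrm{cvxHull}(X,Y)$ has the form $z=\lambda(G_x\xi_x+c_x)+(1-\lambda)(G_y\xi_y+c_y)$ with $\lambda\in[0,1]$. The device is the change of variables
\[
\eta_x:=\lambda\,\xi_x,\qquad \eta_y:=(1-\lambda)\,\xi_y,\qquad \mu:=\lambda-\tfrac12 ,
\]
which turns this bilinear expression into an affine function of the lifted generator $(\eta_x,\eta_y,\mu)$; these are exactly the generator blocks of $Z_h$ in \eqref{eq:Zh}. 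I would then establish the two inclusions by exhibiting, for each point of one side, an admissible generator of the other.

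First I would handle the affine bookkeeping. Substituting $\lambda=\mu+\tfrac12$ and $1-\lambda=\tfrac12-\mu$ gives $z=G_x\eta_x+G_y\eta_y+(c_x-c_y)\mu+\tfrac{c_x+c_y}{2}$, i.e. $z=G_h\xi_h+c_h$ with $G_h,c_h$ as stated. Multiplying the defining equalities $A_x\xi_x=b_x$ and $A_y\xi_y=b_y$ by $\lambda$ and $1-\lambda$ and re-expressing in $\mu$ yields $A_x\eta_x-b_x\mu=\tfrac12 b_x$ and $A_y\eta_y+b_y\mu=\tfrac12 b_y$, which is precisely $A_h\xi_h=b_h$. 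These manipulations are reversible as long as $\lambda\in(0,1)$ (divide by $\lambda$ or $1-\lambda$), so this part is routine linear algebra.

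The substantive step is to verify that the norm-cone blocks of $\mathfrak{C}_h$ encode exactly the conditions $\xi_x\in\mathfrak{C}^{(\tau_x)}_x$, $\xi_y\in\mathfrak{C}^{(\tau_y)}_y$ and $\lambda\in[0,1]$. I would argue by induction on the cone order $\tau$. In the base case $\tau_x=0$ the cone is $\|\xi_x\|_p\le1$ and $\mathfrak{C}^{(1)}_x(\eta_x,\mu,-1,0.5)$ reads $\|\eta_x\|_p-\mu\le\tfrac12$, i.e. $\|\eta_x\|_p\le\lambda$, which with $\eta_x=\lambda\xi_x$ and $\lambda>0$ is equivalent to $\|\xi_x\|_p\le1$; moreover $\|\eta_x\|_p\ge0$ forces $\mu\ge-\tfrac12$ and the $Y$-block symmetrically forces $\mu\le\tfrac12$, so leaving the $\mu$-domain as $\R$ costs nothing. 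For the inductive step I would peel off one level of the recursion $\mathfrak{C}^{(\tau)}=\{\|\xi\|_p+[\,a\ \ b\,w_{\tau-1}]\lambda\le b\,v_{\tau-1}\}$, factor the common multiplier $\lambda$ out of the resulting inequality and invoke the hypothesis, using $w_\tau=[\,a\ \ b\,w_{\tau-1}]$, $v_\tau=b\,v_{\tau-1}$ (hence $v_\tau>0$ with the scale $b=0.5$ used here) together with the assumption $\alpha\xi_x\in\mathfrak{C}^{(\tau_x)}_x$ for $\alpha\in[0,1]$, which certifies that $\eta_x$ lies in the $\lambda$-scaled cone. I expect the main obstacle to be the bookkeeping of the single new scalar $\mu$: it enters the $X$- and $Y$-blocks with opposite signs $\mp1$ and common scale $0.5$, and one must check that these line up so that the \emph{same} $\mu$ simultaneously plays the role of $\lambda-\tfrac12$ on the $X$-side and of $(1-\lambda)-\tfrac12$ on the $Y$-side while the recursive coefficients $w_\tau,v_\tau$ remain consistent across levels.

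Finally I would dispose of the endpoints $\lambda\in\{0,1\}$, where the divisions above are not allowed. When $\mu=-\tfrac12$ the first-order $X$-inequality collapses to $\|\eta_x\|_p\le0$, so $\eta_x=0$ and $z=G_y\eta_y+c_y$ with $A_y\eta_y=b_y$, a point of $Y\subseteq\mathrm{cvxHull}(X,Y)$; the case $\mu=\tfrac12$ is symmetric, and both endpoints are clearly attained by the convex combination. Equivalently, since $X$ and $Y$ are compact, $\mathrm{cvxHull}(X,Y)$ is compact and $Z_h$ is closed, so one may verify the correspondence only for $\lambda\in(0,1)$ and pass to closures. The complexity claim is then immediate from \eqref{eq:Zh}: the lifted generator $(\eta_x,\eta_y,\mu)$ has $n_g^x+n_g^y+1$ components and $A_h$ stacks $A_x$ and $A_y$, giving $n_c^x+n_c^y$ equality constraints; as a byproduct one checks that each block of $\mathfrak{C}_h$ again enjoys the $[0,1]$-scaling property (because $v_\tau>0$), so the construction can be iterated in the estimation recursion without further hypotheses.
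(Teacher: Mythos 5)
Your proposal is correct and follows essentially the same route as the paper: the same perspective-type lifting $\eta_x=\lambda\xi_x$, $\eta_y=(1-\lambda)\xi_y$ with the recentering $\mu=\lambda-\tfrac12$, from which $G_h,c_h,A_h,b_h$ and the order-$(\tau+1)$ cones in \eqref{eq:Zh} are read off. The only difference is that the paper imports the exactness of the lifted description from Theorem 1 of (\cite{optimalCvxHull}) and states the higher-order cone case in one line, whereas you prove the lifted representation directly (both inclusions, including the $\lambda\in\{0,1\}$ endpoints and the induction over cone order), which is a fuller but not essentially different argument.
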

\begin{proof}
	Following Theorem 1 from (\cite{optimalCvxHull}), we write $Z_{h}$ as:
	\begin{equation}
		\begin{aligned}
			Z_{h} = \{p_{h} &= G_{x} \xi_{x} + \lambda c_{x} + G_{y}\xi_{y}+(1-\lambda)c_{y}:\\
							& 0 \leq \lambda \leq 1, A_{x}\xi_{x} = \lambda b_{x}, A_{y}\xi_{y} = (1-\lambda) b_{y},\\
							& \|\xi_{x}\|_{\ell_{x}} \leq \lambda, \|\xi_{y}\|_{\ell_{y}} \leq (1-\lambda)\}
		\end{aligned}
	\end{equation}
	when in the presence of unit balls. 
	
	By performing the substitution $\xi_{\lambda} = \lambda-0.5$, we obtain a generator variable that belongs to the interval $[-0.5,0.5]$ and after reorganizing to write everything in terms of $\xi_{h} = \begin{bmatrix}
		\xi_{x}\tp & \xi_{y}\tp & \xi_{\lambda}\tp
	\end{bmatrix}\tp $, we obtain:
	\begin{equation}
		\begin{aligned}
			Z_{h} = \{p_{h} &= G_{h}\xi_{h} + c_{h}:\\
			& A_{h}\xi_{h} = b_{h}, \|\xi_{x}\|_{\ell_{x}} \leq 0.5+\xi_{\lambda}, \|\xi_{y}\|_{\ell_{y}} \leq 0.5-\xi_{\lambda}\}.
		\end{aligned}
	\end{equation}
	where the norm cones correspond to $\mathfrak{C}^{(1)}_{x}(\xi_{x},\xi_{\lambda},-1,0.5)$ and $\mathfrak{C}^{(1)}_{y}(\xi_{y},\xi_{\lambda},1,0.5)$. If on the other hand, we have a norm cones of order $\tau_{x}$ and $\tau_{y}$, respectively, we obtain the expression $\mathfrak{C}^{(\tau_{x}+1)}_{x}(\xi_{x},\xi_{\lambda},-1,0.5)$ and $\mathfrak{C}^{(\tau_{y}+1)}_{y}(\xi_{y},\xi_{\lambda},1,0.5)$. The number of generators and constraints results from the size of matrix $A_{h}$, which concludes the proof.
\end{proof}

Theorem \ref{pro:cvxHull} produces the exact convex hull as no approximation was required and is available in the toolbox \textrm{ReachTool} that can be found in \url{https://github.com/danielmsilvestre/ReachTool}. Figure \ref{fig:ccgHull} depicts an example of sets $Z_{1}$ and $Z_{2}$ with the respective set $Z_{h}$ as given by Theorem \ref{pro:cvxHull} and what one would get if first converted the sets to CZs and then applied the exact convex hull given in (\cite{koeln:22}). As observed, the proposed method is tight for CCGs and offers better accuracy in comparison to the result from (\cite{koeln:22}). Moreover, since CZs are an instance of CCGs where the generator sets are $\ell_{\infty}$ unit balls, a corollary from Theorem \ref{pro:cvxHull} is that the optimal representation of the convex hull of two CZs is possible only in the more general CCG format. 

\begin{figure}
	\centering
	\includegraphics[scale=0.47,trim=100 271 120 283,clip=true]{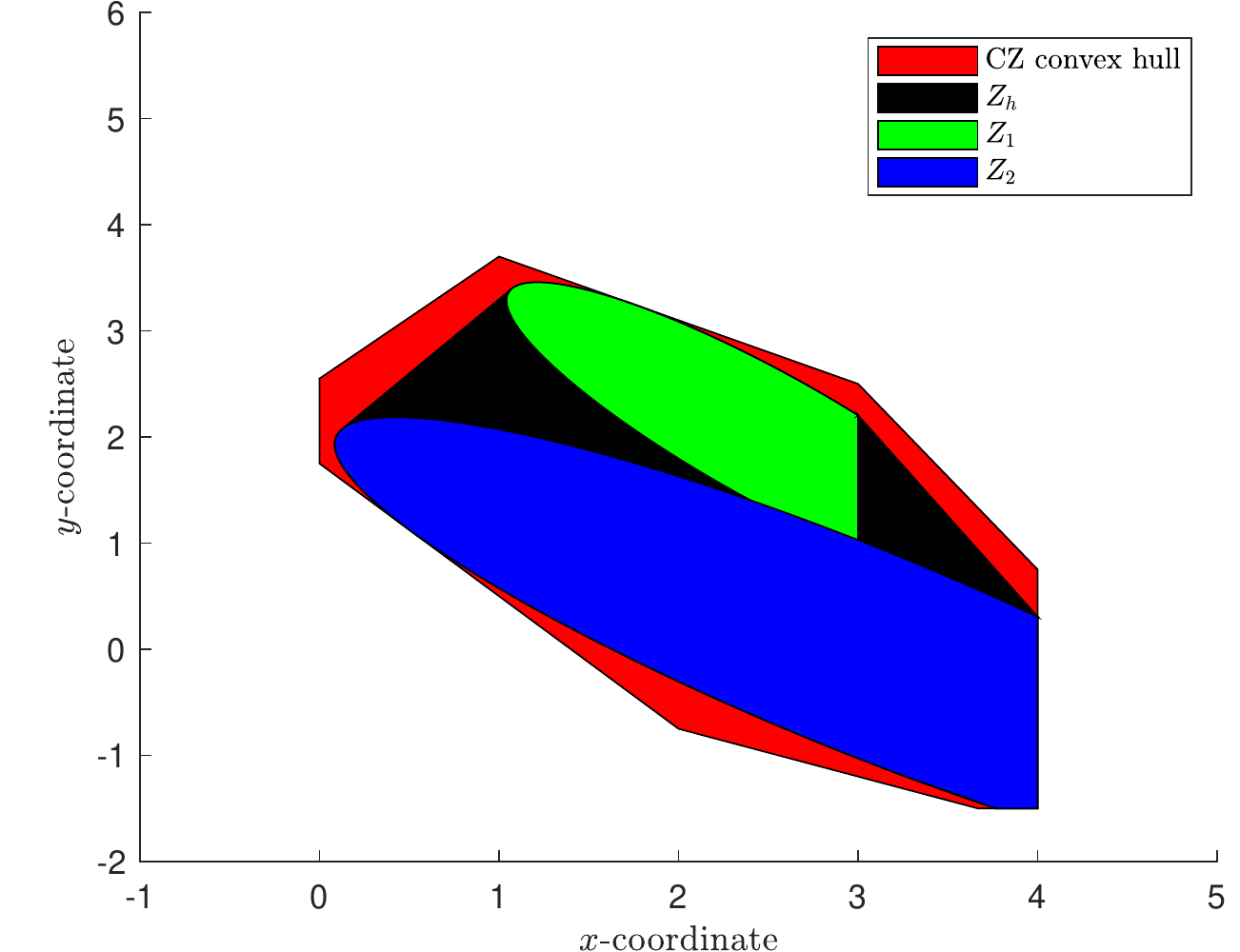}
	\caption{Comparison between the set $Z_{h}$ and the convex hull that one would obtain if first converted both $Z_{1}$ and $Z_{2}$ to constrained zonotopes by overbounding all convex generators by the $\ell_{\infty}$ unit ball.}
	\label{fig:ccgHull}
\end{figure}

The convex hull operator increases linearly the number of auxiliary variables to $n_{g}^{x} + n_{g}^{y} + 1$, however, this step has to be performed for all vertices which are exponential in the number of uncertainties. Such an issue was already present in (\cite{silvestre:17}) for polytopic set descriptions using the optimal convex hull formulation.

In order to keep the computation time for each iteration bounded, we introduce the order reduction in Algorithm \ref{alg:orderRedAlgorithm}, which computes a CCG with a specified number of constraints $\gamma$ using $n + \gamma$ generators which is of the form of a polytope. The procedure starts by constructing a collections of hyperplanes tangent to the surface and then converting to CCG representation. The $\min$ and $\max$ operations are element-wise.

\algsetup{indent=0.5em}

\begin{algorithm}
	\caption{Order Reduction using points on the surface.}\label{alg:orderRedAlgorithm}
	\begin{algorithmic}[1]
		\REQUIRE Set $X(K) \subseteq \R^{n}$ and desired order $\gamma$.
		\ENSURE Calculation of $X(k) \subseteq X_{\mathrm{red}}(k) \subseteq \R^{n}$ with $n_{g} = \gamma + n$ generators and $n_{c} = \gamma$ constraints.
		
		\medskip
		\STATE /* \textit{Get points $p_i$ on the surface such that $p_i = \argmax v_i\tp p_i$, $1 \leq i \leq \gamma$} for random $v_i$ */
		\STATE $[v,p] = \mathrm{sampleSurface}(X(k), \gamma)$ 
		\STATE /* \textit{Compute box $\tilde{Z}$ for the points $p$} */
		\STATE $\resizebox{0.93\hsize}{!}{$\tilde{Z} = (\frac{1}{2}\mathrm{diag}(\max p - \min p),\frac{1}{2}(\max p + \min p),[~],[~],\| \tilde{\xi}\|_{\infty} \leq 1)$}$
		\STATE /* \textit{Calculate $b$ and $\sigma$ such that all entries $v_i\tp p_i \in [\sigma,b]$}*/
		\STATE $\sigma = \min v\tp p$
		\STATE $b = \diag(v\tp p)$
		\STATE $\resizebox{0.92\hsize}{!}{$\begin{aligned}
				X_{\mathrm{red}}(k) = (&\begin{bmatrix}
					\tilde{Z}.G & \0_{n \times \gamma}
				\end{bmatrix},\tilde{Z}.c,\begin{bmatrix}
					v\tp \tilde{Z}.G &\frac{1}{2} \mathrm{diag}(\sigma - b)
				\end{bmatrix},\\
				&\frac{b+\sigma}{2}-v\tp \tilde{Z}.c,\| \tilde{\xi}\|_{\infty} \leq 1)
			\end{aligned}$}$ 
	\end{algorithmic}
\end{algorithm}

\section{Simulations} \label{sec:simulations}

In this section, simulations results are presented for a unicycle model of an autonomous vehicle in discrete-time for which there is a digital compass as an onboard sensor providing measurements of the orientation angle with a $\pm 5^{\circ}$ error. Simulations were run in Matlab R2018a running on a HP machine with a Intel Core i7-8550U CPU @ 1.80GHz and 12 GB of memory resorting to Yalmip as the language to model optimization problems and Mosek as the underlying solver. Videos, figures and code can be found in \url{https://github.com/danielmsilvestre/CCGExactConvexHull}

We recover the example considering unicycle dynamics described in (\cite{bricaire:11}). The vehicle schematic representation is given in Figure \ref{fig:unicycle} and has the following dynamics in discrete-time:
\begin{equation}
	\label{eq:dtUnicycle}
	\begin{bmatrix}
		p_{i}\\
		q_{i}
	\end{bmatrix} (k+1) = 
	\begin{bmatrix}
		p_{i}\\
		q_{i}
	\end{bmatrix} (k) + \mathrm{Ts} ~ A_{i}(\theta_{i})
	\begin{bmatrix}
		v_{i}\\
		w_{i}
	\end{bmatrix}(k)
\end{equation}
where the state $(p_{i},q_{i})$ identify the position of the front of the $i$th vehicle and the inputs $(v_{i},w_{i})$ account for the linear velocity and rotation. Moreover, $\mathrm{Ts} = 0.1$ stands for the sampling time, $\theta_{i}$ (we omit the time dependence in $k$ for a more compact presentation) for the orientation and matrix $A_{i}(\theta_{i})$ is given as:
\begin{equation}
	A_{i}(\theta_{i}) = \begin{bmatrix}
		\cos \theta_{i} & -l\sin\theta_{i}\\
		\sin\theta_{i} & l\cos \theta_{i}
	\end{bmatrix}.
\end{equation}
In this simulation, we consider a single vehicle running for a total of 15 seconds and, assuming that the compass takes measurements $\hat{\theta_{1}}$ of the true variable $\theta_{1}$ that have a maximum of $\pm 5^{\circ}$ following a uniform distribution. Therefore, at each iteration time $k$, matrix $A_{1}$ in the dynamics is not available to the observer and we have to consider $\hat{\theta_{1}}$ to generate the nominal dynamics and an uncertainty $\Delta_{1}$ with maximum magnitude of $5^{\circ}$, which fits \eqref{eq:ulpv}. 

\begin{figure}
	\centering
	\includegraphics[scale=0.18,trim=0 200 0 160]{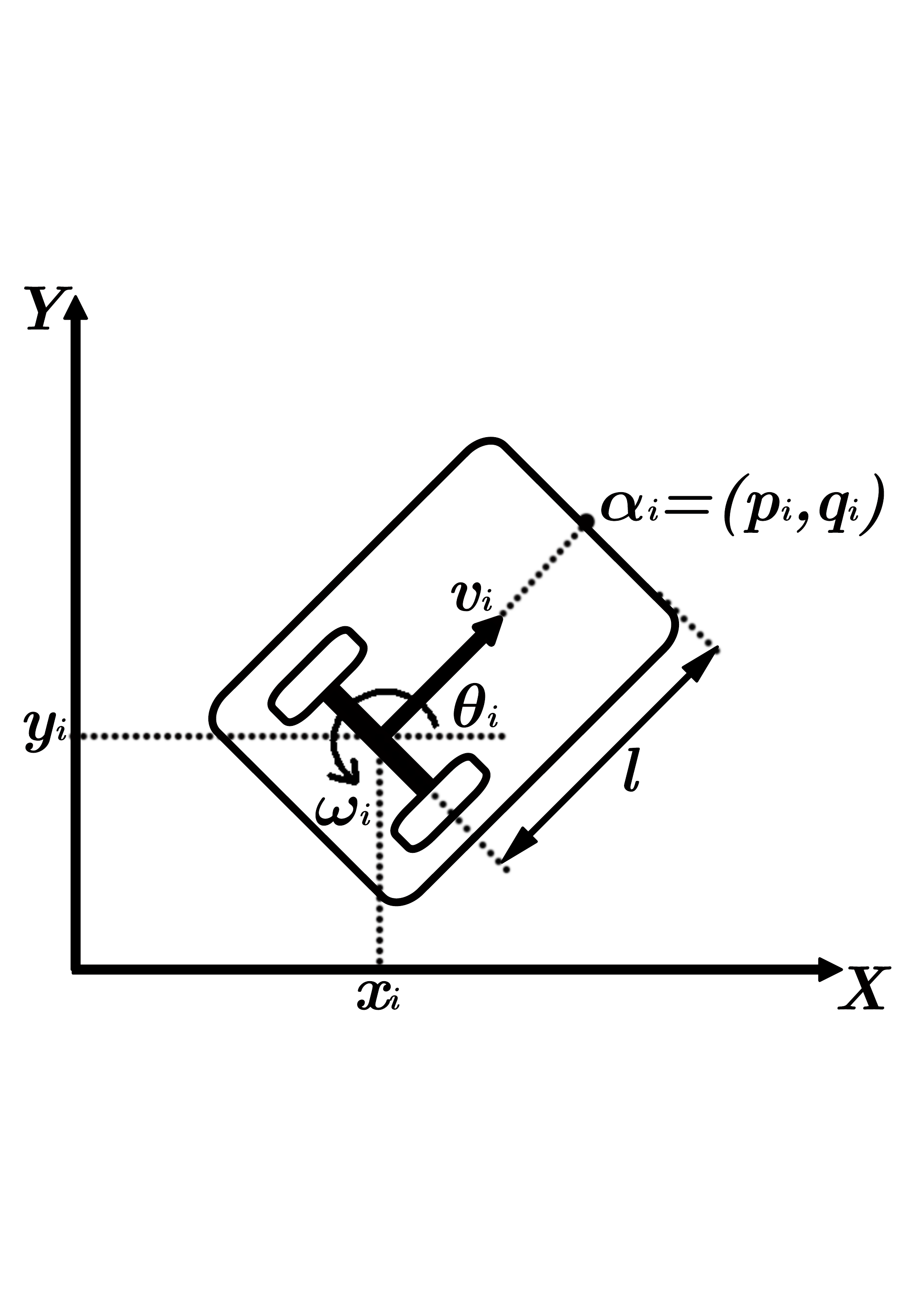}
	\caption{Schematic of the unicycle model for the vehicles.}
	\label{fig:unicycle}
\end{figure}

The trajectory-tracking control law used is:
\begin{equation}
	\label{eq:vehicleController}
	\resizebox{\hsize}{!}{$
		\left[\begin{array}{l}
			v_{i}(k) \\
			w_{i}(k)
		\end{array}\right]=\frac{A_{i}^{-1}\left(\theta_{i}\right)}{\mathrm{Ts}} \left(\tau(k+1) -\frac{\tau(k)}{2} - 0.5\left[\begin{array}{l}
			p_{i}(k) \\
			q_{i}(k)
		\end{array}\right] + d(k)\right)$}
\end{equation}
where $\tau(k)$ accounts for the discrete sequence of waypoints in the trajectory. Once again, we assume that there is a telemetry sensor that produces estimates corrupted by noise of the value of $p_{1}(k)$ and $q_{1}(k)$ and add the corresponding disturbance term $d(k)$ to account for those differences. Moreover, there are two beacons at positions $\begin{bmatrix}
	5 & 25
\end{bmatrix}\tp$ and $\begin{bmatrix}
	23 & 10
\end{bmatrix}\tp $ that can be detected within a 5 and 2 units of distance which allows to better localize the vehicle.

The vehicle performs a figure 8 trajectory such that it can only get measurements from each beacon in one time interval. Figure \ref{fig:fig1-volumes} illustrates the volume evolution for the set-valued estimates $X(k)$ when using CZs (\cite{scott:16}) and CCGs when both used the same order reduction method in Section \ref{sec:estimationCCG}. Since the vehicle is moving and most of the time performing dead reckoning with the uncertain LPV model, the volume keeps increasing and is lowered when the vehicle reaches the beacon areas. The main trend to observe is that the added accuracy of the $\ell_{2}$ ball representing the range measurement from the beacon contributes to a better performance of the CCG filter.

\begin{figure}
	\centering
	\includegraphics[scale=0.47,trim=100 271 120 274,clip=true]{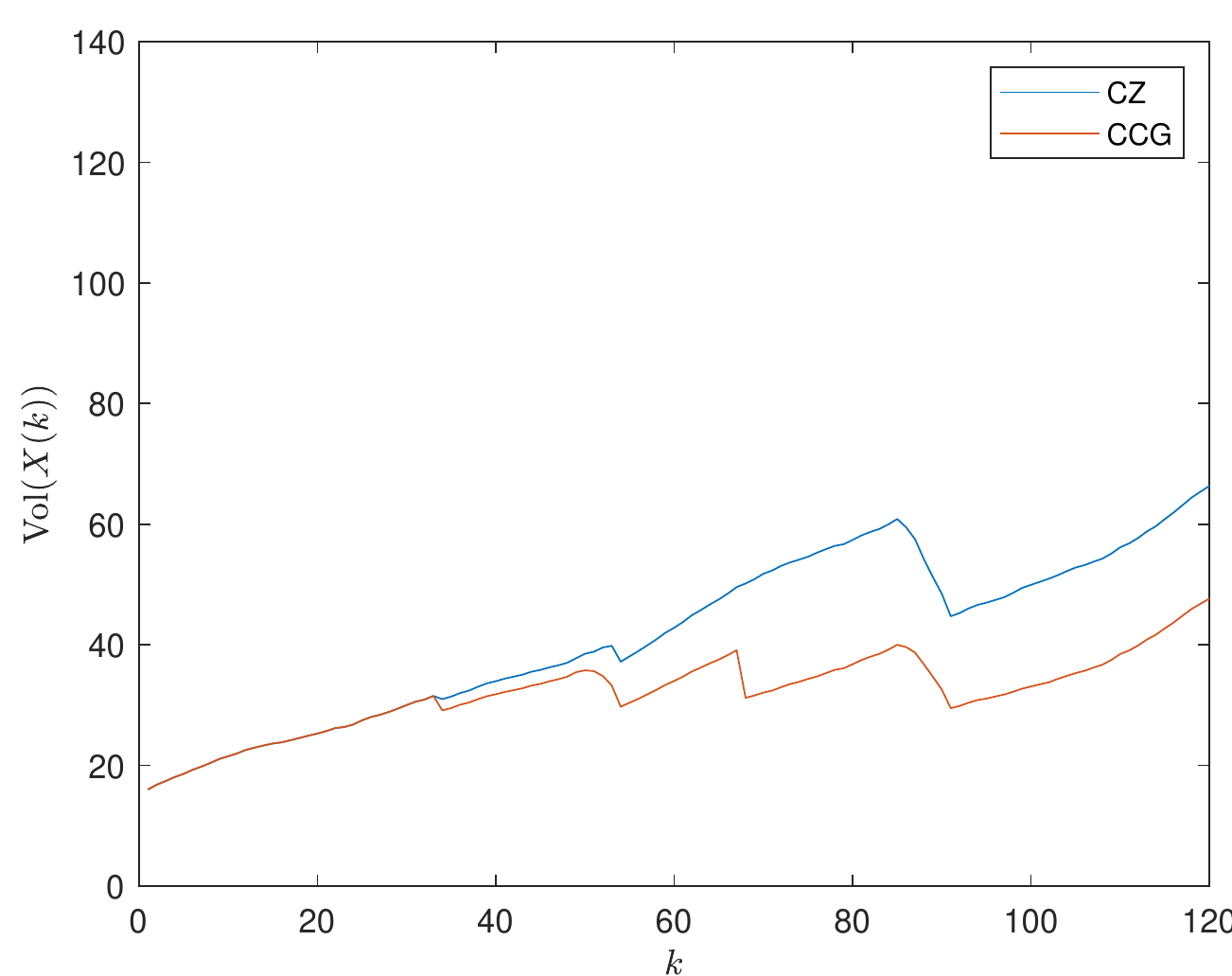}
	\caption{Comparison of the volume for both set-valued estimates when using constrained zonotopes (CZ) and CCGs for the figure 8 trajectory.}
	\label{fig:fig1-volumes}
\end{figure}

In Figure \ref{fig:fig2-F8}, it is illustrated the trajectory executed by the vehicle and the corresponding set-valued estimates using both the CZ and CCG approaches. We have selected a small number of time instants to display the sets as to avoid cluttering the image, but the full video can be found in the GitHub repository associated with the paper. 

\begin{figure}
	\centering
	\includegraphics[scale=0.47,trim=100 271 120 283,clip=true]{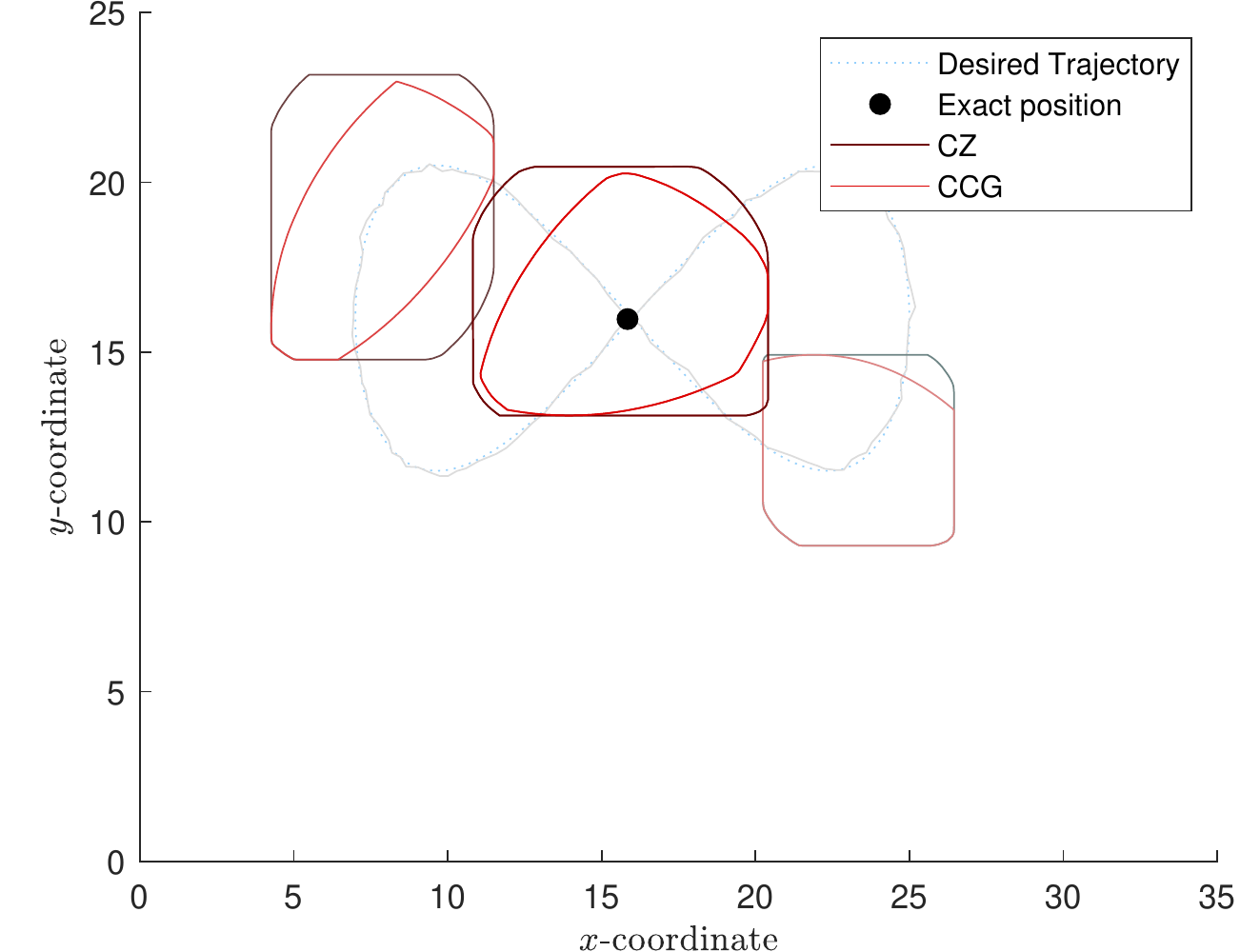}
	\caption{Trajectory executed by the vehicle and the correspondent set-valued estimates at multiples of 40 iterations when using constrained zonotopes (CZ) and CCGs for the figure 8 trajectory.}
	\label{fig:fig2-F8}
\end{figure}

A last relevant issue is the elapsed time in each iteration taken by both filters with different set representations. Figure \ref{fig:fig3-times} shows the computation times across iterations during the whole simulation. At the beginning, both filters have very similar behavior pointing out to the fact that the CCG is yet to have round facets and the order reduction produces equivalent representations. However, as the simulation progresses the set is intersected with the range measurements. The curved boundaries of the CCGs result in a more complex representation. When the vehicle finds the second beacon and the set is considerably reduced in size, the CZ approach has a better performance given that $X(k)$ has a shape close to an interval, where its accuracy is the worst. This result points out to the need to further develop order reduction methods for CCGs that can exploit the nature of the sets. This is not a trivial task given the requirement of computing an outer-approximation to maintain the guaranteed feature in the estimation using set-membership approaches.

\begin{figure}
	\centering
	\includegraphics[scale=0.47,trim=100 271 120 273,clip=true]{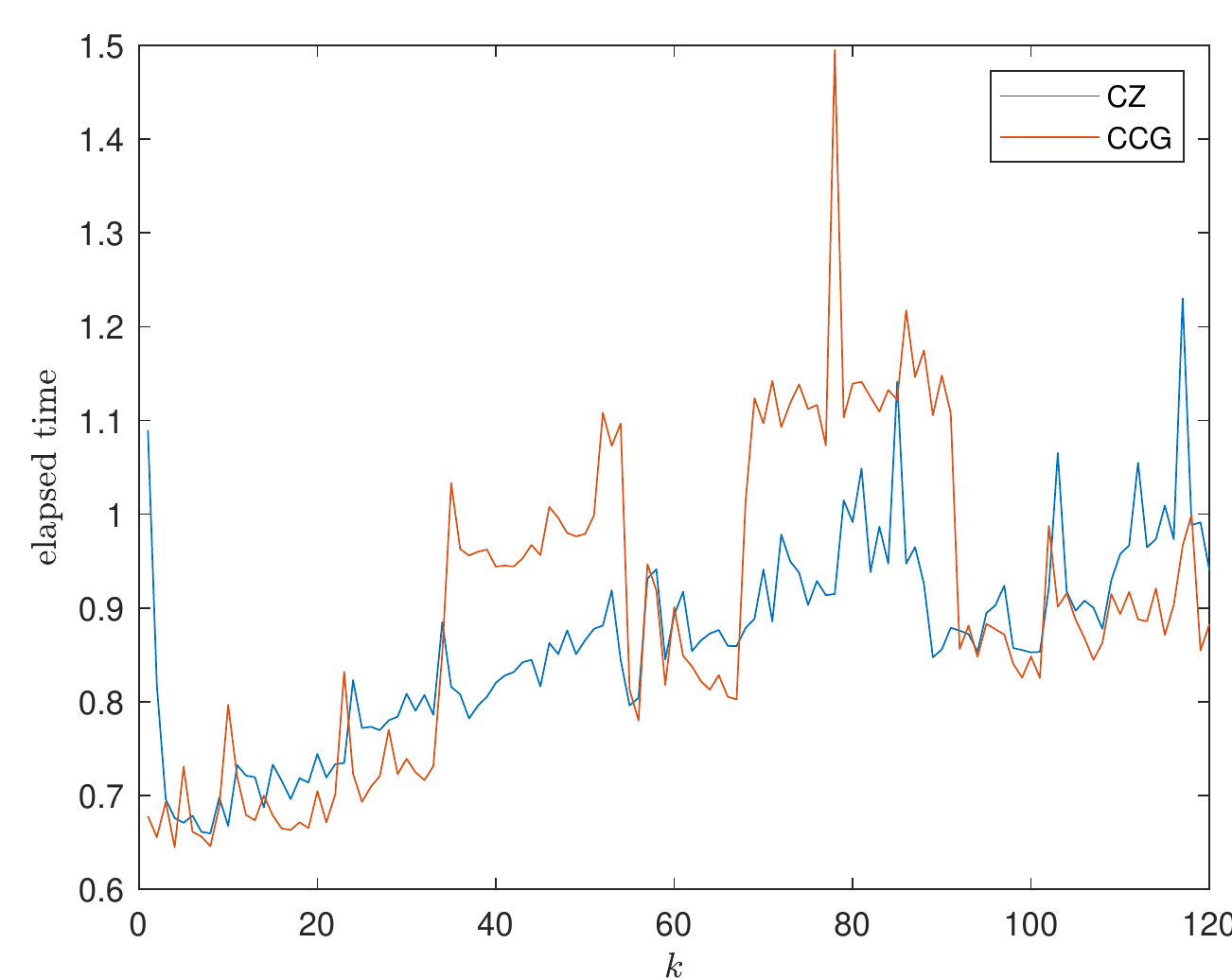}
	\caption{Elapsed time for each iteration of both methods taking into account the constructiong of the set, approximation algorithm and volume computation.}
	\label{fig:fig3-times}
\end{figure}

In order to illustrate an example where both filters should be similar, we simulated a spiral trajectory and increased the range of the beacons in 5 meters each. In this case, the trajectory is not taking advantage of the two beacons. However, the fact that the vehicle will receive the beacon more often should compensate. Figure \ref{fig:fig1-volumes-S} showcases that the volume is indeed much smaller for this trajectory since the vehicle performs dead reckoning less often. In this setup, the main difference between the two filters is precisely the representation of the circular shapes that benefits the CCGs.

\begin{figure}
	\centering
	\includegraphics[scale=0.47,trim=100 271 120 274,clip=true]{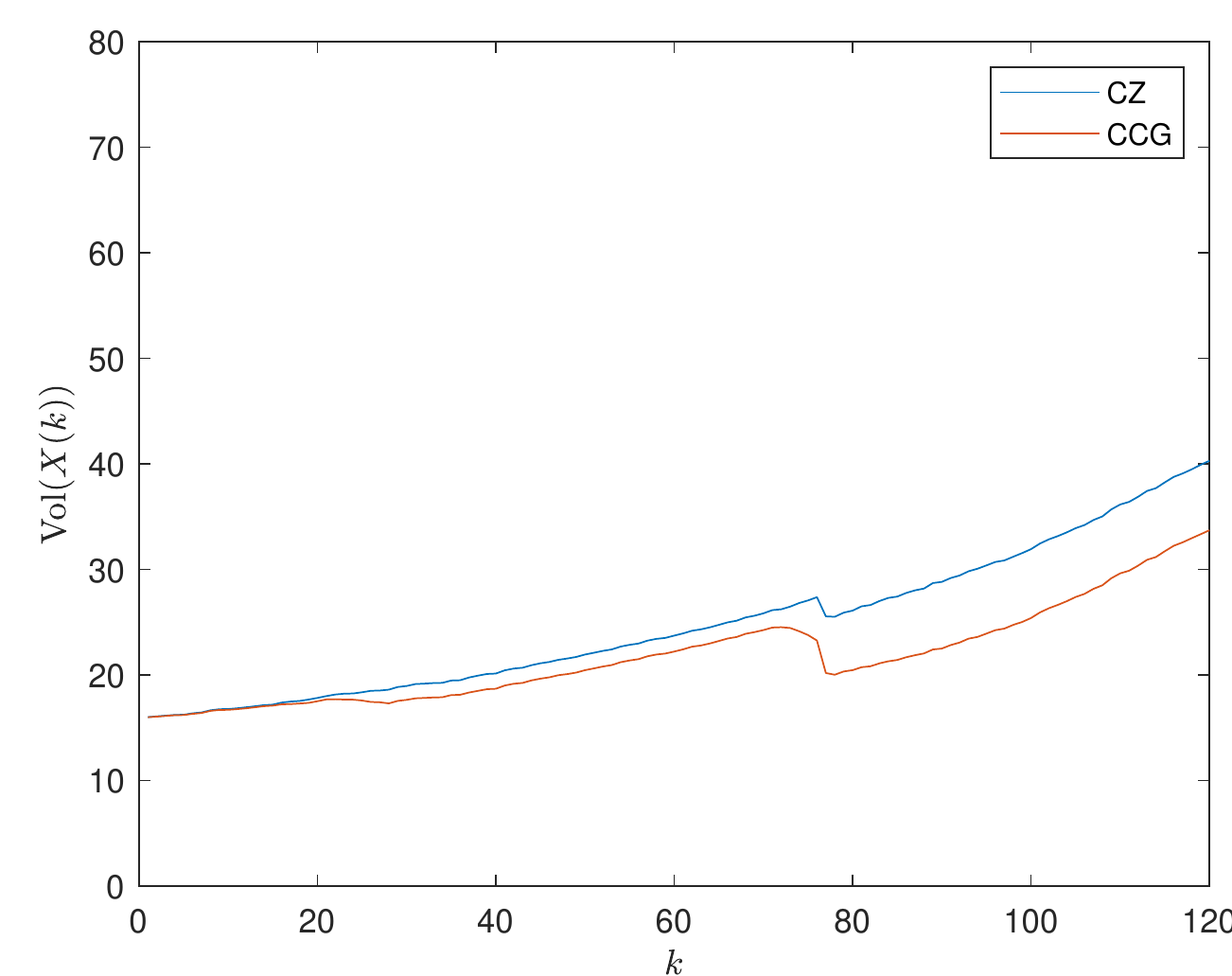}
	\caption{Comparison of the volume for both set-valued estimates when using constrained zonotopes (CZ) and CCGs for the spiral trajectory.}
	\label{fig:fig1-volumes-S}
\end{figure}

In Figure \ref{fig:fig2-S8}, it is depicted the same snapshots for the trajectory where it is noticeable the rounded shapes corresponding to the range measurements. However, as seen in Figure \ref{fig:fig3-times-S}, the more complicated set representation also reduces the performance of both filters. Similarly to the figure 8 trajectory scenario, both simulations illustrate a clear reduction in the conservatism without a very expressive increase in elapsed time for the overall computations. We remark that in terms of orders of magnitude, both filters in normal operation will take between 0.6 and 1.5 seconds, which is not viable for real-time applications and showcases the need to further purse efficient order reductions methods. We did not use the methods from CORA toolbox since we were obtaining even larger computing times. 

\begin{figure}
	\centering
	\includegraphics[scale=0.47,trim=100 271 120 283,clip=true]{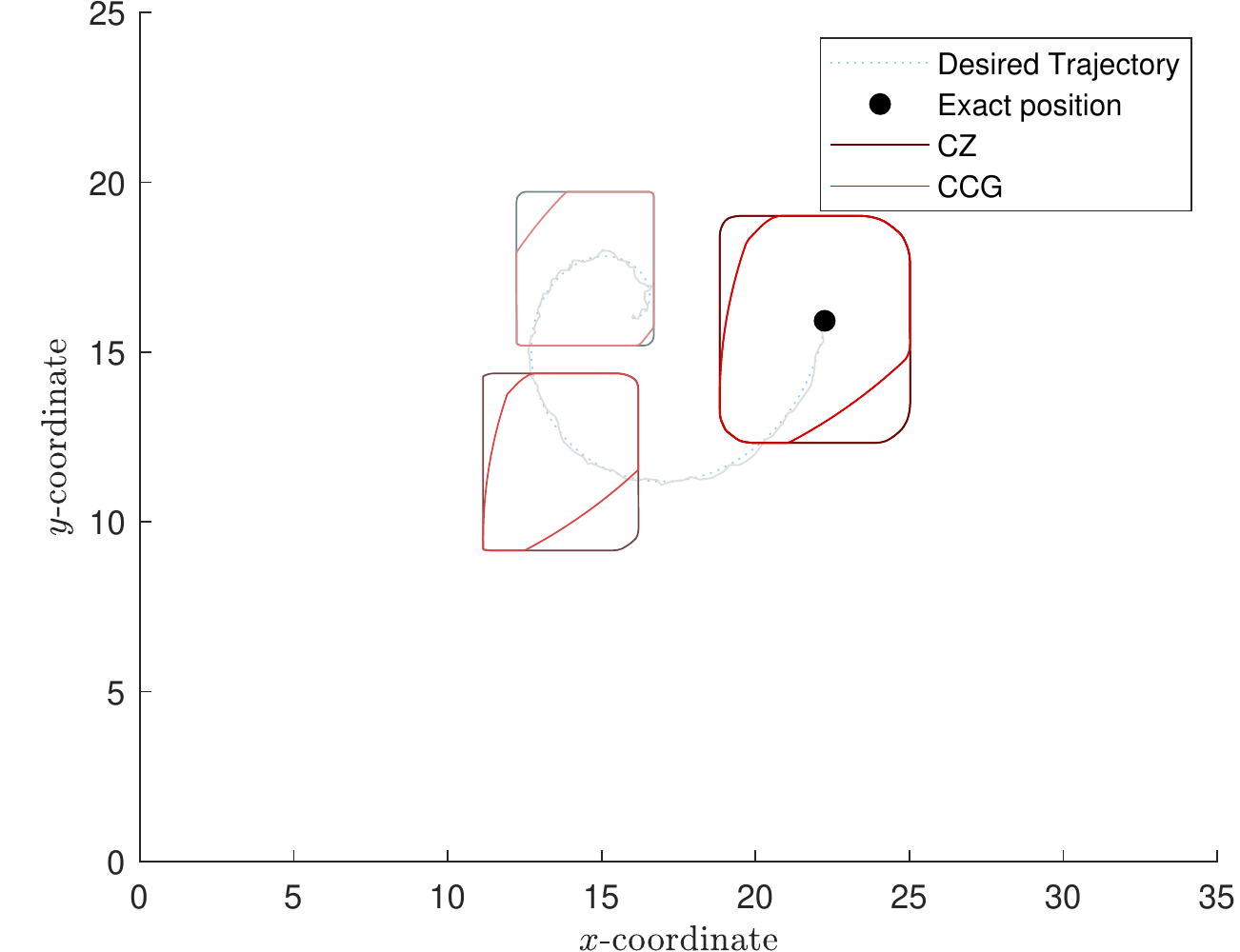}
	\caption{Trajectory executed by the vehicle and the correspondent set-valued estimates at multiples of 40 iterations when using constrained zonotopes (CZ) and CCGs for the spiral trajectory.}
	\label{fig:fig2-S8}
\end{figure}

\begin{figure}
	\centering
	\includegraphics[scale=0.47,trim=100 271 120 273,clip=true]{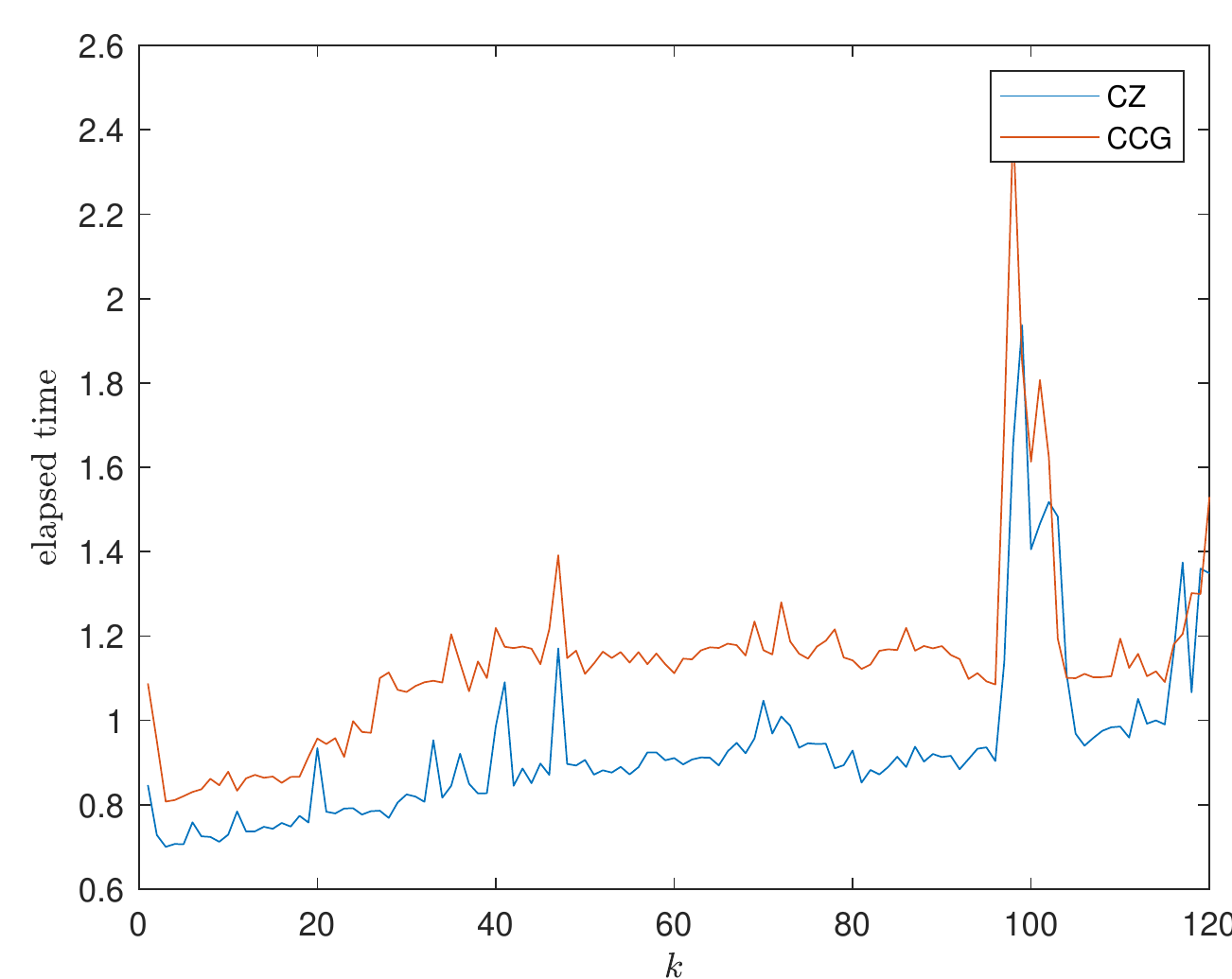}
	\caption{Elapsed time for each iteration of both methods taking into account the constructiong of the set, approximation algorithm and volume computation in the spiral trajectory scenario.}
	\label{fig:fig3-times-S}
\end{figure}

\section{Conclusions and Future Work} \label{sec:discussion}
In this paper, we have address the problem of set-valued estimation for uncertain Linear Parameter-Varying (LPV) models. Given the need for a convex hull operation for the polytopic vertices of the uncertainty space, we develop a closed-form expression tailored for Constrained Convex Generators (CCGs) that is optimal in terms of the number of generators and constraints since it combines both linearly. Given that CZs are a particular instance of CCGs, this results also improves on the state-of-the-art for those methods. 

In a simulation representing a vehicle performing dead reckoning with occasional access to range measurements from beacons, it is shown that the current proposal significantly improves the estimation quality in comparison with CZs that can hardly improve the set-valued estimates. As future work, increasing the performance of order reduction methods for CCGs that take into account the round nature of some of its facets can greatly improve the performance of the filter.  


\bibliography{cvxHull}             

\begin{thebibliography}{29}
\providecommand{\natexlab}[1]{#1}
\providecommand{\url}[1]{\texttt{#1}}
\providecommand{\urlprefix}{URL }
\expandafter\ifx\csname urlstyle\endcsname\relax
  \providecommand{\doi}[1]{doi:\discretionary{}{}{}#1}\else
  \providecommand{\doi}{doi:\discretionary{}{}{}\begingroup
  \urlstyle{rm}\Url}\fi

\bibitem[{Abdallah et~al.(2008)Abdallah, Gning, and Bonnifait}]{bonnifait:08}
Abdallah, F., Gning, A., and Bonnifait, P. (2008).
\newblock Box particle filtering for nonlinear state estimation using interval
  analysis.
\newblock \emph{Automatica}, 44(3), 807 -- 815.
\newblock \doi{10.1016/j.automatica.2007.07.024}.

\bibitem[{Alamo et~al.(2005)Alamo, Bravo, and Camacho}]{camacho:05}
Alamo, T., Bravo, J., and Camacho, E. (2005).
\newblock Guaranteed state estimation by zonotopes.
\newblock \emph{Automatica}, 41(6), 1035 -- 1043.
\newblock \doi{10.1016/j.automatica.2004.12.008}.

\bibitem[{Batista et~al.(2011)Batista, Silvestre, and Oliveira}]{batista:11}
Batista, P., Silvestre, C., and Oliveira, P. (2011).
\newblock Single range aided navigation and source localization: Observability
  and filter design.
\newblock \emph{Systems \& Control Letters}, 60(8), 665--673.
\newblock \doi{10.1016/j.sysconle.2011.05.004}.

\bibitem[{Casey et~al.(2007)Casey, Guimond, and Hu}]{hu:07}
Casey, T., Guimond, B., and Hu, J. (2007).
\newblock Underwater vehicle positioning based on time of arrival measurements
  from a single beacon.
\newblock In \emph{OCEANS 2007}, 1--8.
\newblock \doi{10.1109/OCEANS.2007.4449186}.

\bibitem[{Chernousko(2005)}]{chernousko:05}
Chernousko, F. (2005).
\newblock Ellipsoidal state estimation for dynamical systems.
\newblock \emph{Nonlinear Analysis: Theory, Methods \& Applications}, 63(5),
  872 -- 879.
\newblock \doi{10.1016/j.na.2005.01.009}.
\newblock Invited Talks from the Fourth World Congress of Nonlinear Analysts
  (WCNA 2004).

\bibitem[{{Combastel}(2003)}]{combastel:03}
{Combastel}, C. (2003).
\newblock A state bounding observer based on zonotopes.
\newblock In \emph{European Control Conference (ECC)}, 2589--2594.

\bibitem[{Conforti et~al.(2020)Conforti, Di~Summa, and Faenza}]{optimalCvxHull}
Conforti, M., Di~Summa, M., and Faenza, Y. (2020).
\newblock Balas formulation for the union of polytopes is optimal.
\newblock \emph{Mathematical Programming}, 180(1), 311--326.
\newblock \doi{10.1007/s10107-018-01358-9}.

\bibitem[{Gadre and Stilwell(2005)}]{stilwell:05}
Gadre, A. and Stilwell, D. (2005).
\newblock A complete solution to underwater navigation in the presence of
  unknown currents based on range measurements from a single location.
\newblock In \emph{IEEE/RSJ International Conference on Intelligent Robots and
  Systems}, 1420--1425.
\newblock \doi{10.1109/IROS.2005.1545230}.

\bibitem[{Hern\'{a}ndez-Mendoza et~al.(2011)Hern\'{a}ndez-Mendoza,
  Penaloza-Mendoza, and Aranda-Bricaire}]{bricaire:11}
Hern\'{a}ndez-Mendoza, D.E., Penaloza-Mendoza, G.R., and Aranda-Bricaire, E.
  (2011).
\newblock Discrete-time formation and marching control of multi-agent robots
  systems.
\newblock In \emph{8th International Conference on Electrical Engineering,
  Computing Science and Automatic Control}, 1--6.
\newblock \doi{10.1109/ICEEE.2011.6106618}.

\bibitem[{Jaulin(2011)}]{jaulin:11}
Jaulin, L. (2011).
\newblock Range-only slam with occupancy maps: A set-membership approach.
\newblock \emph{IEEE Transactions on Robotics}, 27(5), 1004--1010.
\newblock \doi{10.1109/TRO.2011.2147110}.

\bibitem[{Julius and Pappas(2009)}]{agung:09}
Julius, A.A. and Pappas, G.J. (2009).
\newblock Trajectory based verification using local finite-time invariance.
\newblock In R.~Majumdar and P.~Tabuada (eds.), \emph{Hybrid Systems:
  Computation and Control}, 223--236. Springer Berlin Heidelberg, Berlin,
  Heidelberg.

\bibitem[{Kousik et~al.(2022)Kousik, Dai, and Gao}]{ellipsotopes}
Kousik, S., Dai, A., and Gao, G.X. (2022).
\newblock Ellipsotopes: Uniting ellipsoids and zonotopes for reachability
  analysis and fault detection.
\newblock \emph{IEEE Transactions on Automatic Control}, 1--13.
\newblock \doi{10.1109/TAC.2022.3191750}.

\bibitem[{Lee et~al.(2007)Lee, Jun, Kim, Lee, Aoki, and Hyakudome}]{lee:07}
Lee, P.M., Jun, B.H., Kim, K., Lee, J., Aoki, T., and Hyakudome, T. (2007).
\newblock Simulation of an inertial acoustic navigation system with range
  aiding for an autonomous underwater vehicle.
\newblock \emph{IEEE Journal of Oceanic Engineering}, 32(2), 327--345.
\newblock \doi{10.1109/JOE.2006.880585}.

\bibitem[{Marcal et~al.(2005)Marcal, Jouffroy, and Fossen}]{thor:08}
Marcal, J., Jouffroy, J., and Fossen, T.I. (2005).
\newblock An extended set-value observer for position estimation using single
  range measurements.
\newblock In \emph{14th International Symposium on Unmanned Untethered
  Submersible Technology (UUST'05)}.

\bibitem[{Raghuraman and Koeln(2022)}]{koeln:22}
Raghuraman, V. and Koeln, J.P. (2022).
\newblock Set operations and order reductions for constrained zonotopes.
\newblock \emph{Automatica}, 139, 110204.
\newblock \doi{10.1016/j.automatica.2022.110204}.

\bibitem[{{Rego} et~al.(2018){Rego}, {Raimondo}, and {Raffo}}]{raffo:18}
{Rego}, B.S., {Raimondo}, D.M., and {Raffo}, G.V. (2018).
\newblock Set-based state estimation of nonlinear systems using constrained
  zonotopes and interval arithmetic.
\newblock In \emph{European Control Conference (ECC)}, 1584--1589.

\bibitem[{Ribeiro et~al.(2020)Ribeiro, Silvestre, and Silvestre}]{ribeiro:20}
Ribeiro, R., Silvestre, D., and Silvestre, C. (2020).
\newblock A rendezvous algorithm for multi-agent systems in disconnected
  network topologies.
\newblock In \emph{28th Mediterranean Conference on Control and Automation
  (MED)}, 592--597.
\newblock \doi{10.1109/MED48518.2020.9183093}.

\bibitem[{Ribeiro et~al.(2021)Ribeiro, Silvestre, and Silvestre}]{ribeiro:21}
Ribeiro, R., Silvestre, D., and Silvestre, C. (2021).
\newblock Decentralized control for multi-agent missions based on flocking
  rules.
\newblock In J.A. Gon{\c{c}}alves, M.~Braz-C{\'e}sar, and J.P. Coelho (eds.),
  \emph{CONTROLO 2020}, 445--454. Springer International Publishing, Cham.

\bibitem[{Sadraddini and Tedrake(2019)}]{ah-polytopes}
Sadraddini, S. and Tedrake, R. (2019).
\newblock Linear encodings for polytope containment problems.
\newblock In \emph{IEEE 58th Conference on Decision and Control (CDC)},
  4367--4372.
\newblock \doi{10.1109/CDC40024.2019.9029363}.

\bibitem[{Scott et~al.(2016)Scott, Raimondo, Marseglia, and Braatz}]{scott:16}
Scott, J.K., Raimondo, D.M., Marseglia, G.R., and Braatz, R.D. (2016).
\newblock Constrained zonotopes: A new tool for set-based estimation and fault
  detection.
\newblock \emph{Automatica}, 69, 126 -- 136.
\newblock \doi{10.1016/j.automatica.2016.02.036}.

\bibitem[{{Silvestre} et~al.(2019){Silvestre}, {Rosa}, {Hespanha}, and
  {Silvestre}}]{silvestre:sensitivity}
{Silvestre}, D., {Rosa}, P., {Hespanha}, J.P., and {Silvestre}, C. (2019).
\newblock Sensitivity analysis for linear systems based on reachability sets.
\newblock In \emph{IEEE 58th Conference on Decision and Control (CDC)},
  361--366.

\bibitem[{Silvestre(2022{\natexlab{a}})}]{dsilvestre:CCGapproxCvxHull}
Silvestre, D. (2022{\natexlab{a}}).
\newblock Accurate guaranteed state estimation for uncertain lpvs using
  constrained convex generators.
\newblock In \emph{2022 IEEE 61st Conference on Decision and Control (CDC)},
  4957--4962.
\newblock \doi{10.1109/CDC51059.2022.9993211}.

\bibitem[{Silvestre(2022{\natexlab{b}})}]{silvestre:CCG}
Silvestre, D. (2022{\natexlab{b}}).
\newblock Constrained convex generators: A tool suitable for set-based
  estimation with range and bearing measurements.
\newblock \emph{IEEE Control Systems Letters}, 6, 1610--1615.
\newblock \doi{10.1109/LCSYS.2021.3129729}.

\bibitem[{Silvestre(2022{\natexlab{c}})}]{silvestre:SVE}
Silvestre, D. (2022{\natexlab{c}}).
\newblock Set-valued estimators for uncertain linear parameter-varying systems.
\newblock \emph{Systems \& Control Letters}, 166, 105311.
\newblock \doi{10.1016/j.sysconle.2022.105311}.

\bibitem[{Silvestre et~al.(2017{\natexlab{a}})Silvestre, Rosa, Hespanha, and
  Silvestre}]{silvestre:detectable}
Silvestre, D., Rosa, P., Hespanha, J.P., and Silvestre, C.
  (2017{\natexlab{a}}).
\newblock Set-based fault detection and isolation for detectable linear
  parameter-varying systems.
\newblock \emph{International Journal of Robust and Nonlinear Control}, 27(18),
  4381--4397.
\newblock \doi{10.1002/rnc.3814}.

\bibitem[{Silvestre et~al.(2017{\natexlab{b}})Silvestre, Rosa, Hespanha, and
  Silvestre}]{silvestre:17}
Silvestre, D., Rosa, P., Hespanha, J.P., and Silvestre, C.
  (2017{\natexlab{b}}).
\newblock Stochastic and deterministic fault detection for randomized gossip
  algorithms.
\newblock \emph{Automatica}, 78, 46 -- 60.
\newblock \doi{10.1016/j.automatica.2016.12.011}.

\bibitem[{Silvestre et~al.(2021)Silvestre, Rosa, and
  Silvestre}]{silvestre:distinguish}
Silvestre, D., Rosa, P., and Silvestre, C. (2021).
\newblock Distinguishability of discrete-time linear systems.
\newblock \emph{International Journal of Robust and Nonlinear Control}, 31(5),
  1452--1478.
\newblock \doi{10.1002/rnc.5367}.

\bibitem[{Thabet et~al.(2014)Thabet, Raïssi, Combastel, Efimov, and
  Zolghadri}]{combastel:14}
Thabet, R.E.H., Raïssi, T., Combastel, C., Efimov, D., and Zolghadri, A.
  (2014).
\newblock An effective method to interval observer design for time-varying
  systems.
\newblock \emph{Automatica}, 50(10), 2677 -- 2684.
\newblock \doi{10.1016/j.automatica.2014.08.035}.

\bibitem[{{Wan} et~al.(2018){Wan}, {Sharma}, and {Sutton}}]{shamma:18}
{Wan}, J., {Sharma}, S., and {Sutton}, R. (2018).
\newblock Guaranteed state estimation for nonlinear discrete-time systems via
  indirectly implemented polytopic set computation.
\newblock \emph{IEEE Transactions on Automatic Control}, 63(12), 4317--4322.

\end{thebibliography}
                                                   







\end{document}